\title{Duality of Orthogonal and Symplectic
Random Tensor Models: General Invariants }
\date{}
\author[1]{Hannes Keppler}
\affil[1]{\normalsize\itshape 
	Heidelberg University, Institut f{\"u}r Theoretische Physik, Philosophenweg 19, 69120 Heidelberg, Germany
	\authorcr \hfill}
\author[2]{Thomas Muller}
\affil[2]{\normalsize\itshape 
	Université de Bordeaux, LaBRI CNRS UMR 5800, Talence, France
	\authorcr \hfill}
\numberwithin{equation}{section}
\definecolor{myBlue}{RGB}{1,1,141}
\newtheoremstyle{plain}
{}{}{\itshape}{}{\bfseries}{.}{.5em}{}
\theoremstyle{plain}    
\newtheorem{theorem}{Theorem}
\newtheorem{lem}[theorem]{Lemma}
\newtheorem{cor}[theorem]{Corollary}
\newtheorem{prop}[theorem]{Proposition}
\newtheorem{definition}[theorem]{Definition}
\newcommand{\cA}{\mathcal{A}}
\newcommand{\cB}{\mathcal{B}}
\newcommand{\cD}{\mathcal{D}}
\newcommand{\cG}{\mathcal{G}}
\newcommand{\cP}{\mathcal{P}}
\newcommand{\cS}{\mathcal{S}}
\newcommand{\vP}{\vec{P}}
\newcommand{\vE}{\vec{E}}
\newcommand{\vcB}{\vec{\mathcal{B}}}
\newcommand{\vcG}{\vec{\mathcal{G}}}
\newcommand{\vcA}{\vec{\mathcal{A}}}
\renewcommand{\del}{\partial}
\DeclareMathOperator{\sgn}{sgn}
\begin{document}

{\hypersetup{allcolors=black}
	\maketitle
	\begin{abstract}
	    In Gurau and Keppler 2022 (arXiv:2207.01993), a relation between orthogonal and symplectic tensor models with quartic interactions was proven. In this paper, we provide an alternative proof that extends to polynomial interactions of arbitrary order. We consider tensor models of order $D$ with no symmetry under permutation of the indices that transform in the tensor product  of $D$ fundamental representations of $O(N)$ and $Sp(N)$. We explicitly show that the models obey the $N$ to $-N$ duality graph by graph in perturbation theory.
	\end{abstract}
	\microtypesetup{protrusion=false}
	\setcounter{tocdepth}{2}
	\tableofcontents
	\microtypesetup{protrusion=true}
}

\section{Introduction}

Random tensor models \cite{Ambjorn,gurau,Gurau-inviation,Guraureview,tanasabook,gurau2019notes}, introduced as a generalization of random matrix models, are probability measures of the type:
\begin{equation}
d\mu[T] = e^{-S[T]} \prod_{(a_1,\dots,a_D)}\frac{dT^{a_1\dots a_D}}{\sqrt{2\pi}} \; ,
\end{equation}
where the action \(S[T]\) is build out of invariants under some symmetry transformation. These models are analog to zero-dimensional quantum field theory and their perturbative expansion can be reorganized as a series in $1/N$ \cite{Gurau-N,Bonzom:2012hw,Carrozza:2015adg,sabine,sylvan,Carrozza:2021qos,Dartois_2013,krajewski2023double}. As their Feynman graphs are dual to higher dimensional triangulations,
random tensors provide a framework for the study of random topological spaces; in one dimension tensor models provide an alternative to the Sachdev-Ye-Kitaev model without quenched disorder \cite{Witten:2016iux,Klebanov}; in higher dimensions they lead to tensor field theories and a new class of large  $N$ \emph{melonic} conformal field theories \cite{Giombi:2017dtl,Bulycheva:2017ilt,Giombi:2018qgp,Klebanov:2018fzb,Gurau-TFT}.

In this paper, we study tensor models with symplectic and/or orthogonal symmetry. Several incarnations of the relation between the orthogonal and symplectic group for negative dimensions have been studied in the literature:
On one hand, in the context of representation theory, one can make sense of the relation \mbox{$SO(-N)\simeq Sp(N)$} \cite{King,Cvitanovic,Mkrtchyan-Veselov,Cvitanovicbook}.
On the other hand, for even $N$, $SO(N)$ and $Sp(N)$ gauge theories are known to be related by changing $N$ to $-N$ \cite{Mkrtchian}. A vector model with symplectic fermions in three space-time dimensions has been studied in \cite{LeClair} and an example of $SO(N)$ and $Sp(N)$ gauge theories with matter fields and Yukawa interactions can be found in \cite{Litim}. This duality has furthermore been shown to hold between orthogonal and symplectic matrix ensembles (the $\beta=1,4$ ensembles) \cite{goe-gse}.

The orthogonal/symplectic duality has already been studied for tensor models by one of the authors in \cite{Duality_Hannes}. There, a graded colored tensor model (reviewed in Def.~\ref{def: intro} below) was introduced. It was then shown that the partition function and connected two point correlation function of this model was invariant when replacing $N_c\leftrightarrow-N_c$ and at the same time changing the symmetry $O(N_c)\leftrightarrow Sp(N_c)$. However, the analysis in \cite{Duality_Hannes} made use of an intermediate field/Hubbard-Stratonovich transformation. This method allows to work with bosonic fields only, but is only applicable to the case of quartic interactions. Working directly in the usual colored graph representation of tensor models we generalize results of \cite{Duality_Hannes} to interactions of arbitrary order and proceed in a more direct way.

\paragraph{Main result.}
We consider tensors of order $D$ with no symmetry under permutation of their indices and call the position of an index its color $c$, with $c=1,2,\dots D$. 
The tensors transform in the tensor product of $D$ fundamental representations of $O(N)$ and/or  $Sp(N)$, i.e.~each tensor index is transformed by a different $O(N)$ or $Sp(N)$ matrix.
The tensor components are real fermionic (anticommuting, odd) if the number of $Sp(N)$ factors is odd and real bosonic (commuting, even) if this number is even. It is convenient to assign a parity to the tensor indices: $\abs{c}= 0$ or $\abs{c} = 1$ if the index transforms under $O(N_c)$ or $Sp(N_c)$, respectively.
In this paper, we generalize the results of \cite{Duality_Hannes} by allowing arbitrary polynomial interactions.

\begin{definition}\label{def: intro}
The real graded tensor model, obeys the symmetry:
\begin{equation}
\pmb{O}_1(N_1)\otimes \pmb{O}_2(N_2)\otimes\dots\otimes \pmb{O}_D(N_D), \quad \pmb{O}_c(N_c) = \smash[b]{\begin{cases}
O(N_c), \quad \abs{c}=0 \\
Sp(N_c), \quad \abs{c}=1
\end{cases}} \; ,
\end{equation}
(therefore the name ``graded'') is defined by the measure:
\begin{align}
d\mu[T] \simeq e^{-S[T]}\ \prod_{a_1,\dots,a_D} dT^{a_1\dots a_D} \; ,
 \qquad
S[T]= \frac{1}{2} \Big(T^{a_1\dots a_D} T^{b_1\dots b_D} \prod_{c=1}^D g^c_{a_c b_c}\Big) + \sum_{\substack{\cB\ \text{connected,}\\ |V(\cB)|>2}} \frac{\lambda_{\cB}}{|V(\cB)|} I_{\cB}(T) \; ,
\end{align}
where $g^c_{a_cb_c}$ is the Kronecker $\delta_{a_cb_c} $ for $\abs{c}=0$ or the canonical symplectic form $\omega_{a_cb_c}$ for $\abs{c}=1$ and the sum runs over independent connected trace invariants $I_{\cB}(T)$ of order higher than two, indexed by undirected colored graphs $\cB$ (see Sec.\,\ref{sec: def} for more details).
\end{definition}

The partition function $Z$ and the expectation value of an invariant $\langle{I_{\cB}(T)}\rangle$ are defined by:
\begin{equation}\label{eq: ZandI}
Z(\{\lambda\})=\int d\mu[T], \quad\text{and}\quad \langle{I_{\cB}}(T)\rangle(\{\lambda\} )=\frac{1}{Z} \int d\mu[T]\ I_{\cB}(T) \; ,
\end{equation}
and can be evaluated in a perturbative expansion.
Our main theorem is the following:
\begin{theorem}\label{thm: main}
The perturbative series of the partition function $Z$ and expectation values of invariants $\langle{I_{\cB}(T)}\rangle$ can be expressed as a formal sum over $(D+1)$-colored undirected graphs $\cG$.
Each summand, corresponding to a specific graph $\cG$, writes as a product:
\begin{equation}
K(\{\lambda\},\cG)\cdot \prod_{c\in\cD} \big((-1)^{|c|}N_c\big)^{F_{c/0}(\cG)} \;,
\end{equation}
of a term $K$, encoding the dependence on the coupling constants $\lambda_\cB$ and some combinatorial numbers associated to $\cG$, and a term depending on $N_1,N_2\dots ,N_D$ (see Sec.~\ref{sec: def} for the relevant definitions and Sec.~\ref{sec: pertexp} for the precise form of the series).
\end{theorem}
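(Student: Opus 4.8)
The plan is to set up the perturbative expansion of $Z$ and $\langle I_{\cB}(T)\rangle$ directly in the colored-graph representation, and then track how the $N_c$-dependence and the sign $(-1)^{|c|}$ arise from contracting the Gaussian (quadratic) part of the measure. First I would expand $e^{-S_{\mathrm{int}}[T]}$ as a formal power series in the coupling constants $\lambda_{\cB}$; each monomial is a product of interaction invariants $I_{\cB_i}(T)$, which in the colored-graph language is a disjoint union of bubbles $\cB_i$ (each a $D$-colored graph whose vertices represent tensors and whose color-$c$ edges encode the pairing $g^c$). The remaining Gaussian integral over $T$ is evaluated by Wick's theorem: one sums over all pairings of the tensor fields, i.e.\ over all ways of adding a new color (color $0$) of edges matching up the vertices of $\bigsqcup_i \cB_i$ in pairs. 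The result is precisely a sum over $(D+1)$-colored graphs $\cG$, each obtained by decorating a disjoint union of interaction bubbles with a perfect matching in color $0$. This reproduces the ``formal sum over $(D+1)$-colored graphs'' part of the statement, with $K(\{\lambda\},\cG)$ collecting the product of couplings $\lambda_{\cB_i}$, the symmetry/combinatorial factors $1/|V(\cB_i)|$, and the number of Wick contractions giving rise to $\cG$.

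Next I would compute the actual value of the contraction associated with a fixed $\cG$. Each color-$0$ Wick contraction of $T^{a_1\dots a_D}$ with $T^{b_1\dots b_D}$ produces the inverse of the quadratic form, which factorizes over colors as $\prod_c (g^c)^{-1}_{a_c b_c}$; for $|c|=0$ this is $\delta^{a_c b_c}$ and for $|c|=1$ it is (up to sign conventions) the inverse symplectic form. After all contractions, the amplitude is a product over colors $c\in\{1,\dots,D\}$ of a ``loop factor'': following the faces of $\cG$ with colors $\{0,c\}$, each such face is a closed cycle alternating between $g^c$ (from the interaction bubbles) and $(g^c)^{-1}$ (from the Wick pairings), and contracting around it yields either $\tr(\mathds{1}_{N_c}) = N_c$ in the orthogonal case or $\tr$ of the symplectic identity, which carries an extra sign. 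The key bookkeeping lemma is that a closed $\{0,c\}$-cycle in the symplectic case evaluates to $-N_c$ — i.e.\ $(-1)^{|c|}N_c$ — and there are exactly $F_{c/0}(\cG)$ such cycles. I would isolate this as the technical heart of the argument: carefully fixing orientations/conventions for $\omega$ and $\omega^{-1}$ so that each $\{0,c\}$-face contributes $(-1)^{|c|}N_c$ regardless of how many bubble-edges vs.\ contraction-edges it traverses, and checking that the leftover sign from the fermionic ordering of the $T$'s (present when the number of $Sp$ factors is odd, i.e.\ the tensor is Grassmann-odd) is already absorbed into $K$ and does not spoil the clean $((-1)^{|c|}N_c)^{F_{c/0}}$ form.

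The main obstacle I expect is precisely the sign accounting: disentangling (i) the sign from permuting anticommuting tensor components in the Wick expansion, (ii) the sign from the antisymmetry of $\omega$ (which means $\omega_{ab}$ and $\omega_{ba}$ differ), and (iii) the sign from $\omega^{-1}$ relative to $\omega$, and showing that (ii) and (iii) recombine along each $\{0,c\}$-face into exactly one factor of $-1$ per face, while (i) depends only on the combinatorial structure of $\cG$ and hence can be packaged into $K$. I would handle this by choosing a reference orientation for the edges of $\cG$ in each color and a canonical vertex ordering, then proving a small combinatorial identity — essentially that traversing a cycle of even length $2k$ with alternating $\omega$'s and $\omega^{-1}$'s gives $(-1)\cdot N_c$ — by induction on $k$ or by a direct index computation. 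Once this lemma is in place, assembling the product over all colors and all faces gives the stated factor $\prod_{c\in\cD}((-1)^{|c|}N_c)^{F_{c/0}(\cG)}$, and the remaining coupling-constant and purely combinatorial data define $K(\{\lambda\},\cG)$, completing the proof. Finally I would note that the same derivation applied with an extra insertion $I_{\cB}(T)$ only changes which graphs $\cG$ appear (those containing a marked bubble $\cB$), leaving the per-graph formula unchanged, which yields the statement for $\langle I_{\cB}(T)\rangle$ as well.
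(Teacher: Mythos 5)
Your overall architecture (expand the interaction, apply Wick's theorem to generate the color-$0$ matching, read off one factor per face of colors $c/0$) matches the paper's route through Lemma~\ref{lem: wick}, Prop.~\ref{prop: Gaussianexpofinv} and Prop.~\ref{prop: Z}. However, the step you single out as the technical heart contains a genuine gap. The ``small combinatorial identity'' you propose --- that a closed cycle alternating $\omega$'s and $\omega^{-1}$'s evaluates to $-N_c$ --- is false as a pure index computation: since $g^c_{a b} g^{c,\,d b} = (-1)^{|c|}\delta_{a}^{d}$, the contraction around a $c/0$-face yields $(-1)^{|c|}N_c$ or $+N_c$ depending on how the (arbitrarily) directed edges sit around the cycle --- in the paper's language, $(-1)^{|c|\,F_{c/0,odd}}$ per face, not $(-1)^{|c|}$ per face. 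You cannot repair this by ``carefully fixing orientations,'' because the color-$0$ edges are shared among the $c/0$ face systems of all colors $c$, and because reorienting a color-$c$ bubble edge or a color-$0$ Wick edge simultaneously changes the other sign sources you list. The missing factor $(-1)^{|c|}$ for each \emph{even} face comes precisely from the pairing signs: the fermionic reordering sign $\epsilon(\vP_{ref,2k},\vP)^{\sum_c|c|}$ from Wick's theorem combined, color by color via Property~\eqref{prop2}, with the sign-fixing prefactors $\epsilon(\vP_{ref,2k},\vE^c(\vcB))^{|c|}$ built into the very definition \eqref{eq: Invariants} of $I_{\vcB}(T)$, giving $\epsilon(\vP,\vE^c(\vcB))^{|c|} = (-1)^{|c|\,F_{c/0,even}}$ by Lemma~\ref{lem:sign}. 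Only the product $(-1)^{|c|(F_{c/0,even}+F_{c/0,odd})} = (-1)^{|c|F_{c/0}}$ is orientation-independent.

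This means your proposed bookkeeping --- face signs supplied entirely by the $\omega$-contractions, with the fermionic reordering sign ``absorbed into $K$'' --- cannot work. Two concrete failure modes: (i) when the number of symplectic colors is even the tensor is bosonic and there is \emph{no} reordering sign at all, yet every symplectic face must still contribute $-N_c$; there the sign must come from the prefactors $\epsilon(\vP_{ref,2k},\vE^c(\vcB))^{|c|}$ in the definition of the invariant, which your proposal never introduces (without them $I_{\cB}$ is not even well defined as a function of the undirected graph $\cB$); (ii) if a graph-dependent residual sign were hidden in $K$, the stated factorization would hold only formally and the $N_c\to -N_c$ duality --- the point of the theorem --- would no longer follow, since $K$ would then depend on the grading $|c|$. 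The fix is to adopt the paper's strategy: define the invariants with the sign-fixing prescription, prove the identity $\epsilon(\vP_c,\vP_{c'}) = (-1)^{F_{c/c',even}}$, and let the even-face and odd-face signs recombine into $\big((-1)^{|c|}N_c\big)^{F_{c/0}(\cG)}$.
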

\begin{proof}
The theorem follows from Prop.~\ref{prop: Z} and Cor.~\ref{cor: I}.
\end{proof}
The essential remark is that all the factors $N_c$ come in the form $(-1)^{\abs c} N_c$, hence  each term is mapped into itself by exchanging $O(N_c) \leftrightarrow  Sp(N_c) $ and $N_c \leftrightarrow -N_c$.
Because graphically each $N_c$ is associated to a face of colors $c/0$ (cycle of edges of alternating colors $c$ and $0$),  this result can be seen as a generalization of the usual minus sign in quantum field theory for each fermionic loop. But one should keep in mind, that the full tensor is not necessary fermionic (its components are not necessarily anticommuting Graßmann numbers).


\section{Setup of the Models}\label{sec: def}
In this section we define the model. First, we specify the space of tensors we are interested in. Second, we give a description of the possible tensor invariants in terms of directed edge colored graphs, and third, we specify the model and its invariance properties.

\paragraph{The tensors.}
A generic tensor $T^{a_1\dots a_D}$ has no symmetry properties under permutation of its indices hence the indices have a well defined position $c$, called their \emph{color}. The set of colors is denoted $\mathcal{D}=\{1,\dots, D\}$. We assign a parity to each color and sometimes call the colors with $\abs{c}=0$ even and the ones with $\abs{c}=1$ odd. The tensor components shall be bosonic (even) if the number of colors with $\abs{c}=1$ (i.\,e.\ odd colors) is even and fermionic (odd) otherwise: the Graßmann number $T^{a_1\dots a_D}$ has the same parity as $\sum_{c\in\mathcal{D}} \abs{c}$.

Let ${H}_c  =  \mathds{R}^{N_c|0}$ for $\abs{c}=0 $, respectively $ {H}_c  =\mathds{R}^{0|N_c}$ for $\abs{c}=1  $ be a real supervector space of dimension \(N_c\) that is either purely even or purely odd. Each $H_c$ is endowed with a non-degenerate \emph{graded symmetric} inner product $g^c$:
\begin{equation}
g^c(u,v) = (-1)^{|c|} g^c(v,u),\quad \forall u, v\in {H}_c\;.
\end{equation}
In a standard basis $g^c$ agrees with the standard symmetric or symplectic form, that is $g^c_{a_cb_c} = \delta_{a_cb_c}$ for $\abs{c}=0 $, respectively $ g^c_{a_cb_c} = \omega_{a_cb_c}$ for $\abs{c}=1 $.
As usual, we write $g^{c,\, a_c b_c}$ for the components of $(g^c)^{-1}$.
The isometry group preserving $g^c$ is either $O(N_c)$ in the $\abs{c}=0$ case or $Sp(N_c)$ in the $\abs{c}=1$ case, denoted collectively by 
$\pmb{O}_{c}(N_c) := \{ O_c\ |\ g^c_{a_cb_c} = O_{a_c}^{\ a'_c} O_{b_c}^{\ b'_c} g^c_{a'_c b'_c}  = ( O g^c O^T )_{a_cb_c}\}$.

The tensors are elements of
\begin{equation}
H_1\otimes H_2\otimes \dots\otimes H_D\;,
\end{equation}
and transform in the tensor product representation of several orthogonal and symplectic groups according to the type of the individual ${H}_c$'s:
\begin{equation}
	T^{a_1\dots a_D} \to \tensor{(O_1)}{^{a_1}_{b_1}} \dots \tensor{(O_D)}{^{a_D}_{b_D}}\ T^{b_1\dots b_D},
	\quad O_1\otimes\dots\otimes O_D\in\bigotimes_{c\in\mathcal{D}} \pmb{O}_c(N_c) \; .
\end{equation}

\paragraph{Directed edge colored graphs and invariants.}
Invariant polynomials in the tensor components are constructed by contracting the indices of color \(c\) with the inner product \(g^c\).
The unique quadratic invariant is:
\begin{equation}
	g^{\otimes D}(T,T)=T^{a_{\cD}} T^{b_{\cD}} \prod_{c\in\cD} g^c_{a_c b_c} \; .
\end{equation}
General \emph{trace invariants} are polynomials in the \(T^{a_{\cD}}\)'s build by contracting pairs of indices of the same color. These invariants admit a graphical representation as \emph{directed edge colored graphs}. 
The graphical representatives of tensor invariants are often called bubbles \cite{gurau,tanasabook}.

\begin{definition}[Directed Edge Colored Graphs]\label{def: coloredgraph}
A closed directed edge \(D\)-colored graph (directed colored graph for short) is a directed graph \mbox{\(\vcB=(V(\vcB),\vE(\vcB))\)} with vertex set \(V(\vcB)\) and edge set \(\vE(\vcB)\) such that:
\begin{itemize}
\item The edge set is partitioned into \(D\) disjoint subsets \mbox{\(\vE(\vcB)=\bigsqcup_{c=1}^{D} \vE^c(\vcB)\)}, where we denote the subset of edges of color \(c\) by \mbox{\(\vE^c(\vcB)\ni e^c=(v,w)\)}, with \(v,w\in V(\vcB)\).
\item Each set $\vE^c(\vcB)$ is a directed pairing of the vertices.
\end{itemize}
\end{definition}
As a consequence all vertices are \(D\)-valent with all the  edges incident to a vertex having distinct colors, and $V(\vcB)$ is of even cardinality. We denote by $F_{c/c'}(\vcB)$ the number of faces of colors $c\neq c'$, that is cycles made of alternating edges of these two distinct colors. Per default, we will consider directed graphs $\vcB$ and view undirected graphs as equivalence classes $\cB =[\vcB]$ of their directed versions. All graphs have labeled vertices. Some examples are depicted in Figs.~\ref{fig: bubbles},\,\ref{fig: interaction}.

\begin{figure}[t]
    \centering
	\begin{tikzpicture}[font=\small]
		\node at (0,0) {\includegraphics[scale=4]{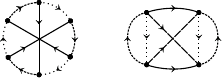}};
	\end{tikzpicture}
	\caption{Examples of colored directed graphs: a 3-colored sextic graph on the left (the wheel) and a 4-colored quartic one on the right. The different colors are represented by different line styles}
	\label{fig: bubbles}
\end{figure}

Due to the signs introduced by the reversing the edges of odd colors (remember that this corresponds to transposing the antisymmetric matrix $\omega$), several invariants differ only by a minus sign. This ambiguity can be fixed by using a sign fixing prescription generalizing the one given in  \cite{Duality_Hannes}.

The invariant of order $2k$ defined by the directed colored graph $\vcB$ with vertices $V(\vcB)=\{1,2,\dots,2k \}$, will be denoted by $I_{\vcB}(T)$, and is given by the following expression:
\begin{equation}\label{eq: Invariants}
\begin{split}
 I_{\vcB}(T) &=   \sum_{a^1_\cD, a^2_\cD, \dots, a^{2k}_\cD} \Big(\prod_{(i,j) \in \vP_{ref,2k} }T^{a^i_\cD} T^{a^j_\cD} \Big) \prod_{c\in\cD} \Big(\epsilon(\vP_{ref,2k}, \vE^c(\vcB))^{|c|} \prod_{(i,j) \in \vE^c(\vcB)}
 g^c_{a^i_c a^j_c}\Big) \\
 &=   \sum_{a^1_\cD, a^2_\cD, \dots, a^{2k}_\cD} \Big(\prod_{(i,j) \in \vP_{ref,2k} }T^{a^i_\cD} T^{a^j_\cD} \Big)  \Big( \prod_{c\in\cD}\epsilon(\vP_{ref,2k}, \vE^c(\vcB))^{|c|} \; K^c_{\vcB,a^1_c,\dots,a^{2k}_c}\Big) \;,
\end{split}
\end{equation}
here $\vP_{ref,2k}$ is an arbitrary but fixed reference pairing on the set of vertices, chosen to be \linebreak[4]\mbox{$\vP_{ref,2k} = \{(1,2),\dots,(2k-1,2k)\}$}, and the sign of the pairings $\epsilon(\vP_{ref,2k_p},  \vE^c(\vcB))^{|c|}$ was introduced to fix the sign ambiguity. As an example, consider the pillow interaction (Fig.~\ref{fig: interaction}) defined by the graph $\vec{\mathcal{P}}$. The corresponding tensor invariant is:
\begin{equation}\label{eq: exInvariant}
    \begin{split}
    I_{\vec{\mathcal{P}}} = (-1)^{|1|+|2|}  \sum_{a,b,c,d} \left( T^{a_1 a_2 a_3} T^{b_1 b_2 b_3} \right) \left( T^{c_1 c_2 c_3} T^{d_1 d_2 d_3} \right) \;
     g^1_{c_1 a_1} \; g^1_{d_1 b_1} \; g^2_{c_2 a_2} \; g^2_{d_2 b_2} \; g^3_{a_3 b_3} \; g^2_{c_3 d_3} \;.
    \end{split}
\end{equation}
Before reviewing the definition and properties of $\epsilon$, which will play a central role, let us comment about how the sign ambiguity shall be understood:
\begin{itemize}
    \item Because the tensors may anticommute, for writing down the expression for the invariant, it is necessary to fix an order. This is done by the reference pairing $\vec{P}_{ref,2k}$.
    \item If two directed colored graphs $\vcA$ and $\vcB$ differ only by redirecting some of their edges, the corresponding trace invariants are the same, up to transposing some $(g^c)^T=(-1)^{|c|}g^c$ which could lead to a sign difference between $I_{\vcA}(T)$ and $I_{\vcB}(T)$. In order to restrict to independent invariants, we consider such directed graphs to be in the same \textit{equivalence class} $[\vcA]=[\vcB]$, i.e.~they describe the same \textit{undirected} colored graph.
    \item The sign prescription ensures that $I_{\vcB}(T)$ is a \textit{class function}:
    \begin{equation} I_{\vcA}(T)= I_{\vcB}(T)\; \quad \text{if}\ \cA=\cB\;, \end{equation}
    and thus any representative of $[\vcB]$ can be used to write down the invariant.
\end{itemize}

Because of the last point, from now on, we will label the invariants by undirected graphs $\cB$, and it is understood that for \eqref{eq: Invariants} an arbitrary directed representative $\vcB\in[\vcB]\equiv\cB$ has been chosen.

A trace invariant is called connected, if the corresponding colored graph is so. Note that any product of two invariants can be written as a single disconnected invariant such that:
\begin{equation}\label{eq: unionofinv}
    I_{\cA}(T) I_{\cB}(T) = I_{\cA\sqcup\cB} (T) \;,
\end{equation}
and a new reference pairing is given by the disjoint union of the original reference pairings \eqref{prop3}.

\begin{figure}[t]
    \centering
	\begin{tikzpicture}[font=\small]
		\node at (0,0) {\includegraphics[scale=4]{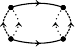}};
		\node at (-1.3,1) {$1$};
		\node at (1.3,1) {$2$};
		\node at (-1.3,-1) {$3$};
		\node at (1.3,-1) {$4$};
		
		\node [anchor=west] at (-7,.6) {\(\displaystyle \vec{P_1}=\vP_2=\lbrace (3,1),(4,2) \rbrace\)};
		\node [anchor=west] at (-7,0) {\(\displaystyle \vec{P_{3}}=\lbrace (1,2),(3,4) \rbrace\)};
		\node [anchor=west] at (-7,-.6) {\(\displaystyle \vec{P}_{ref}=\lbrace (1,2),(3,4) \rbrace\)};
		
		\node at (5,0.6) {\(\displaystyle \sgn(\vec{P}_{ref},\vec{P_1})= -1\)};
		\node at (5,-0.0) {\(\displaystyle \sgn(\vec{P}_{ref},\vec{P_2})= -1\)};
		\node at (5,-0.6) {\(\displaystyle \sgn(\vec{P}_{ref},\vec{P_3})= 1 \hphantom{-}\)};
	\end{tikzpicture}
	\caption{A colored directed graph (the pillow $\vec{\mathcal{P}}$) defined by three pairings. The edges of colors one, two and three are represented by dashed, dotted and solid lines, respectively. The associated invariant is given in \eqref{eq: exInvariant}.}
	\label{fig: interaction}
\end{figure}

\paragraph{Sign of oriented pairings.}

Consider two oriented pairings $\vP_1$ and $\vP_2$ on the same set of $2k$ elements:
\begin{equation}
\begin{aligned}
    \vP_1 &= \{ (i_1, i_2) , \dots , (i_{2k-1}, i_{2k}) \} \;, \\
    \vP_2 &= \{ (j_1, j_2) , \dots , (j_{2k-1}, j_{2k}) \} \;.
\end{aligned}
\end{equation}
The \textit{sign $\epsilon(\vP_1,\vP_2)$ of the two pairings with respect to each another} is defined as the sign of the permutation that takes $i_1\dots i_{2k}$ into $j_1 \dots j_{2k}$. The properties of this sign are: 
\begin{enumerate}
    \item The sign is symmetric under permutation of its arguments:
    \begin{equation}\label{prop1}
        \epsilon(\vP_1,\vP_2) = \epsilon(\vP_2,\vP_1) \;.
    \end{equation}
    \item For three pairings $\vP_1$, $\vP_2$, $\vP_3$ on the same set, one has:
        \begin{equation}\label{prop2}
            \epsilon(\vP_1,\vP_2) = \epsilon(\vP_1,\vP_3) \epsilon(\vP_2,\vP_3) \;.
        \end{equation}
    \item For two pairings $\vP_1$, $\vP_2$ on a first set $\mathcal{S}_1$ of $2k$ elements and two pairings $\vP_3$, $\vP_4$ on a second set $\mathcal{S}_2$ of $2p$ elements, the product $ \epsilon(\vP_1,\vP_2) \epsilon(\vP_3,\vP_4)$ can be written as the sign of the disjoint union of pairings $\vP_1 \sqcup \vP_3$ and $\vP_2 \sqcup \vP_4$ (on the set $\mathcal{S}_1 \sqcup \mathcal{S}_2$):
    \begin{equation}\label{prop3}
       \epsilon(\vP_1,\vP_2) \epsilon(\vP_3,\vP_4) = \epsilon(\vP_1 \sqcup \vP_3,\vP_2 \sqcup \vP_4) \;.
    \end{equation}
\end{enumerate}
The sign of two pairings has a nice graphical interpretation that will be of great use.

\begin{figure}[t]
	\centering
	\begin{tikzpicture}[font=\footnotesize] 
	\node at (0,0) {\includegraphics[scale=3]{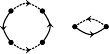}};
	\node at (-1.6,.55) {\(1\)};
	\node at (-1.6,-.6) {\(4\)};
	\node at (-.05,-.6) {\(3\)};
	\node at (-.05,.55) {\(2\)};
	\node at (.55,-.25) {\(5\)};
	\node at (1.75,-.25) {\(6\)};
	\node at (-5,.3) {\(\displaystyle \vec{P_1}=\lbrace (1,4),(3,2),(5,6) \rbrace\)};
	\node at (-5,-.3) {\(\displaystyle \vec{P_2}=\lbrace (1,2),(4,3),(6,5) \rbrace\)};
	\node at (5,0) {\(\displaystyle \sgn(\vec{P_1},\vec{P_2})= (-1)^{F_{1/2,even}} = -1\)};
	\end{tikzpicture}
	\caption{Illustration of Lemma~\ref{lem:sign}. \(\vec{P_1}\) is represented with solid and \(\vec{P_2}\) with dashed edges. The face on the left is odd and the face on the right is even.}
	\label{fig: sign}
\end{figure}
\begin{figure}[t]
    \centering
	\begin{tikzpicture}[font=\footnotesize]
		\node at (0,0) {\includegraphics[scale=3]{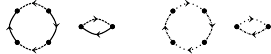}};
		\node at (-3.91,.5) {$1$};
		\node at (-2.35,.5) {$2$};
		\node at (-2.35,-.5) {$3$};
		\node at (-3.91,-.5) {$4$};
		\node at (-1.9,0) {$5$};
		\node at (-.45,0) {$6$};
		\node at (-6.2,.3) {\(\displaystyle \vec{P_c}=\lbrace (2,3),(4,1),(6,5) \rbrace\)};
		\node at (-6.2,-.3) {\(\displaystyle \vec{P_{c'}}=\lbrace (2,1),(3,4),(5,6) \rbrace\)};
		\node at (.85,.5) {$1$};
		\node at (2.4,.5) {$2$};
		\node at (2.4,-.5) {$3$};
		\node at (.85,-.5) {$4$};
		\node at (2.85,0) {$5$};
		\node at (4.3,0) {$6$};
		\node at (6.0,.8) {\(\displaystyle \rho=(1234)(56)\)};
		\node at (6.6,.2) {\(\displaystyle \vec{P_r}=\lbrace (2,3),(4,1),(6,5) \rbrace\)};
		\node at (6.6,-.4) {\(\displaystyle \vec{P_{r'}}=\lbrace (1,2),(3,4),(5,6) \rbrace\)};
	\end{tikzpicture}
	\caption{Illustration of the Proof of Lemma~\ref{lem:sign}. The faces of colors $r/r'$ coincide with the cycles of $\rho$, and with the faces of colors $c/c'$, up to the orientation of their edges. $\vP_c$ solid line, $\vP_{c'}$ dashed, $\vP_r$ dash-dotted, $\vP_{r'}$ dotted.}
	\label{fig: sign_proof}
\end{figure}

\begin{lem}\label{lem:sign}
Depicting the $2k$ elements of a set $\cS$ as vertices, the two pairings $\vP_{c}$ and $\vP_{c'}$ on this set can be represented by colored (one color -- $c$ or $c'$ -- for each pairing), oriented edges connecting the vertices. Define a face of colors $c/c'$ as an alternating cycle of edges of color $c$ and $c'$. A face is called even (resp.~odd), if an even (odd) number of its edges point in the same directions around its cycles.\footnote{Note that a face of colors $c/c'$ always has an even number of edges, and hence this notion is well defined.} Denoting $F_{c/c',even}$ resp.~$F_{c/c',odd}$ the number of even and odd faces of colors $c/c'$, the sign $\epsilon(\vP_c,\vP_{c'})$ can be expressed as: 
\begin{equation}
    \epsilon(\vP_{c},\vP_{c'}) = (-1)^{F_{c/c',even}} \;.
\end{equation}
\end{lem}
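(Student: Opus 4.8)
The plan is to reduce the statement to the case of a single face and then to a one‑line combinatorial base case, using the multiplicativity of the sign. Two elementary observations about $\epsilon$ will be used repeatedly. First, $\epsilon(\vP_1,\vP_2)$ is unchanged if the pairs are relisted inside $\vP_1$ or inside $\vP_2$, since exchanging two consecutive pairs moves a block of two entries past another such block, an even permutation. Second, applying one and the same permutation $\sigma$ to both pairings leaves $\epsilon$ invariant, because the permutation carrying the $\sigma\vP_1$-sequence to the $\sigma\vP_2$-sequence is conjugate under $\sigma$ to the one for $\vP_1,\vP_2$.

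Now view $\cS$ as the vertex set of the graph with edges $\vP_c\sqcup\vP_{c'}$; its connected components are exactly the faces of colors $c/c'$, and each of $\vP_c,\vP_{c'}$ restricts to a pairing of the vertex set $V(f)$ of any face $f$. Listing the pairs of $\vP_c$ (and of $\vP_{c'}$) face by face, which is allowed by the first observation, makes the permutation relating the two sequences block diagonal with one block per face, the $f$-block being a permutation of $V(f)$ of sign $\epsilon(\vP_c|_{V(f)},\vP_{c'}|_{V(f)})$. Since the faces partition $\cS$, iterating property~\eqref{prop3} gives
\[ \epsilon(\vP_c,\vP_{c'}) = \prod_{f}\,\epsilon\big(\vP_c|_{V(f)},\,\vP_{c'}|_{V(f)}\big)\,, \]
so it suffices to prove that for a single face on $2m$ vertices the corresponding factor equals $-1$ if the face is even and $+1$ if it is odd.

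For a single face I would first use the second observation to relabel its vertices $1,2,\dots,2m$ consecutively along the cycle, so that the underlying unoriented pairings are $\{\{1,2\},\{3,4\},\dots\}$ and $\{\{2,3\},\{4,5\},\dots,\{2m,1\}\}$. In the reference orientation where every edge points forward along $1\to2\to\cdots\to2m\to1$ (as in Fig.~\ref{fig: sign_proof}), the $\vP_c$-sequence is $1,\dots,2m$ and the $\vP_{c'}$-sequence is $2,\dots,2m,1$, so $\epsilon$ is the sign of a $2m$-cycle, namely $(-1)^{2m-1}=-1$; moreover all $2m$ edges point forward, an even number, so the face is even, consistent with the claim. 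Any other orientation is obtained from this one by flipping edges one at a time; each flip transposes the two entries of one pair in the relevant sequence, so it multiplies $\epsilon$ by $-1$, and at the same time it changes the number of forward edges by one, hence toggles the face between even and odd. Both sides of the identity thus transform in the same way under every elementary flip and agree for the reference orientation, so they agree for all orientations; taking the product over faces finishes the proof.

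The only point requiring genuine care is the bookkeeping: fixing once and for all the convention for ``the permutation carrying one ordered sequence to another'', and checking that the face-by-face block decomposition is literally an instance of \eqref{prop3} together with the listing-order invariance. The base-case computations — the sign of a long cycle and the parity of the number of forward edges — are immediate, and the fact that the even/odd dichotomy is well defined is precisely the footnoted remark that every face has an even number of edges.
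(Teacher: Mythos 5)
Your proof is correct and follows essentially the same route as the paper's: both compare the given orientation to the canonical all-forward orientation around each face (whose sign is a product of $(-1)$'s, one per even-length cycle) and then account for the discrepancy by the parity of reversed edges per face, which is exactly the even/odd face dichotomy. Your face-by-face factorization via \eqref{prop3} and the explicit edge-flip induction merely spell out the step the paper states tersely as ``$\rho$ differs from $\sigma$ by an odd number of transpositions for each odd face.''
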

\noindent See Fig.~\ref{fig: sign} for an illustration.
\begin{proof}
Denoting $\vP_c = \{ (i_1, i_2) , \dots , (i_{2k-1}, i_{2k}) \}$ and $\vP_{c'} = \{ (j_1, j_2) , \dots , (j_{2k-1}, j_{2k}) \}$, by definition, $\epsilon(\vP_c,\vP_{c'})$ is the sign of the permutation $\sigma=\big(\begin{smallmatrix}i_1&i_2&\dots&i_{2k-1}&i_{2k}\\ j_1&j_2&\dots&j_{2k-1}&j_{2k}\end{smallmatrix}\big)$.
	One can define a second permutation $\rho$, whose cycles coincide with the faces of $\vP_c \sqcup \vP_{c'}$ (neglecting for a moment the orientation of the edges). 
	The permutation $\rho$ is defined by:
	\begin{itemize}
		\item $\rho (i)$ is the successor of the vertex $i$ that is reached by going clockwise around the face of colors $c/c'$ to which $i$ belongs.
		\item Writing $\rho =\big(\begin{smallmatrix}l_1&l_2&\dots&l_{2k-1}&l_{2k}\\ m_1&m_2&\dots&m_{2k-1}&m_{2k}\end{smallmatrix}\big)$, one can think of $\rho$ as consisting of two directed pairings $\vP_r = \{ (l_1, l_2) , \dots , (l_{2k-1}, l_{2k}) \}$ and $\vP_{r'} = \{ (m_1, m_2), \dots , (m_{2k-1}, m_{2k}) \}$ that differ from $\vP_c$ and $\vP_{c'}$ only by the direction of their edges.
	\end{itemize}
	Since all the faces are of even length, the sign of the permutation $\rho$ is given by $\sgn(\rho) = (-1)^{F_{c/c'}}$. The crucial point is, that $\rho$ is chosen such that it differs from $\sigma$ by an odd number of transpositions for each odd face, i.e.~$\sgn(\sigma) = (-1)^{F_{c/c',odd}} \sgn(\rho) = (-1)^{F_{c/c',even}}$. See Fig.~\ref{fig: sign_proof} for an illustration.
\end{proof}

For a review on the connection of directed pairings and their sign with pfaffians and fermionic Gaußian integrals, we refer the interested reader to the Appendix of \cite{integrals}.

\paragraph{Graded colored tensor model.}
As discussed above, the set of independent trace invariants is indexed by equivalence classes of directed colored graphs $\cB = [\vcB]$. Being class functions, any representative $\vcB\in \cB$ can be used to define $I_{\vcB}(T)$.
\begin{definition}[Real Graded Tensor Model]\label{def: tensormodel} The \emph{real graded tensor model} is the measure\footnotemark:
\begin{equation}
\begin{aligned}
d\mu[T]=e^{-S[T]}\ [dT], \quad [dT]&=\prod_{a_{\mathcal{D}}} dT^{a_1\dots a_D} \cdot
\begin{cases}
\frac{1}{(2\pi)^{  \prod_c N_c/2 } }\;, & \sum_{c=1}^{D} \abs{c} = 0 \mod 2\\
1 \;, & \sum_{c=1}^{D} \abs{c} = 1 \mod 2
\end{cases} \; ,
\\ \text{with}\quad
S[T]&= \frac{1}{2} g^{\otimes D}(T,T) + \sum_{\substack{\cB\ \text{connected,}\\ |V(\cB)|>2}} \frac{\lambda_{\cB}}{|V(\cB)|} I_{\vcB}(T) \; ,
\end{aligned}
\end{equation}
where the normalization is such that \(\int d\mu[T] =1\) for \(\lambda_{\cB}=0\ \forall \cB\).
\end{definition}
\footnotetext{We treat the measures $d\mu[T]$ as a perturbed Gaußian measures. As such we do not concern ourselves with the convergence of the various tensor and matrix integrals. As we treat the Gaußian integrals as generating functions of graphs, we will not adress such issues.}

\section{Perturbative Expansions in Terms of Colored Graphs}\label{sec: pertexp}

In this section, we compute the expectation value of invariants and show that these obey the $N\to -N$ duality graph by graph in the perturbative expansion. On a technical level, most statements are generalizations of the known results for colored random tensor models, as summarized e.g.~in~\cite{gurau,tanasabook}.
Let us recall the commutation relation of the tensor component:
\begin{equation}
    T_{a_\cD} T_{b_\cD} = (-1)^{\sum_{c \in \cD } |c| }T_{a_\cD} T_{b_\cD}\;,
\end{equation}
and introduce the following short hand notations:
\begin{equation}
\begin{aligned}
    \delta^{a_\cD}_{b_\cD} = \prod_{i=1}^D \delta^{a_i}_{b_i}\;,\quad g_{a_\cD b_\cD} &= \prod_{c \in \cD} g^c_{a_c b_c} \;, \quad g^{a_\cD b_\cD} = \prod_{c \in \cD} g^{c,\, a_c b_c}\;, \\
    \text{and}\quad (\partial_T, \partial_T)&=\sum_{a_\cD, b_\cD}\frac{\partial}{\partial T^{a_\cD}} g^{a_\cD b_\cD} \frac{\partial}{\partial T^{b_\cD}} \;.
\end{aligned}
\end{equation}
We first compute the expectation values with respect to the Gaußian measure that is obtained by setting all coupling constants $\lambda_{\cB}$ to zero. This is Wick's theorem for the Gaußian expectation value $\langle\, \dots \rangle_0$ of $2k$ anticommuting variables.

\begin{lem}\label{lem: wick}
The Gaußian expectation value of an even number of tensors:
\begin{equation}
\langle T^{a^1_\cD} \dots T^{a^{2k}_\cD} \rangle_{0}
= \int [dT] e^{-\frac{1}{2} g^{\otimes D}(T,T) }\ T^{a^1_\cD} \dots T^{a^{2k}_\cD}\;,
\end{equation}
can be computed as a sum over the set $\cP_{2k}$ of (undirected) pairings of $2k$ elements:
\begin{equation}\label{eq: wick}
\langle T^{a^1_\cD} \dots T^{a^{2k}_\cD} \rangle_{0} = \sum_{P \in {\cP}_{2k}} \epsilon( \vP_{ref,2k},\vP)^{\sum_{c \in \cD }|c|} \Big( \prod_{(i,j) \in \vP} g^{a^i_\cD a^j_\cD} \Big) \;,
\end{equation}
where $\vP$ is an (arbitrarily chosen) directed version of $P$. Each summand is independent of that choice, because the sign $\epsilon( \vP_{ref,2k},\vP)$ is odd under reordering of pairs, while $g^{a^i_\cD a^j_\cD}$ is antisymmetric in the relevant cases. 
The odd moments vanish and the sign is trivial for commuting (bosonic) tensor components.
\end{lem}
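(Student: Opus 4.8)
The plan is to split on the parity of $\sum_{c\in\cD}|c|$ and reduce each case to a classical Gaußian/Berezin Wick theorem on the $\prod_c N_c$ variables $\{T^{a_\cD}\}$. The structural input is that $g_{a_\cD b_\cD}=\prod_c g^c_{a_cb_c}$ obeys $g_{b_\cD a_\cD}=(-1)^{\sum_c|c|}g_{a_\cD b_\cD}$, and likewise for $g^{a_\cD b_\cD}$; hence $g^{\otimes D}(T,T)$ is a nondegenerate \emph{symmetric} form on commuting variables when $\sum_c|c|$ is even and a nondegenerate \emph{antisymmetric} form on anticommuting variables when $\sum_c|c|$ is odd — in either case of exactly the symmetry type for which a Gaußian generating function exists, with inverse form having components $g^{a_\cD b_\cD}=\prod_c g^{c,a_cb_c}$ since $A=\bigotimes_c g^c$ gives $A^{-1}=\bigotimes_c (g^c)^{-1}$.

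In the bosonic case ($\sum_c|c|$ even) I would invoke the Isserlis--Wick theorem for a real centred Gaußian vector with covariance $A^{-1}$: $\langle T^{a^1_\cD}\cdots T^{a^{2k}_\cD}\rangle_0=\sum_{P\in\cP_{2k}}\prod_{(i,j)\in P}g^{a^i_\cD a^j_\cD}$, the odd moments vanishing by $T\mapsto -T$, and the $(2\pi)^{-\prod_c N_c/2}$ factor in $[dT]$ being precisely what enforces $\langle 1\rangle_0=1$. Since the components commute and $g^{a^i_\cD a^j_\cD}$ is then symmetric under $i\leftrightarrow j$, there is no ordering ambiguity and $\epsilon(\vP_{ref,2k},\vP)^{\sum_c|c|}=1$, so \eqref{eq: wick} is literally the classical formula.

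In the fermionic case ($\sum_c|c|$ odd) I would use the standard real Berezin-integral identities: for anticommuting $\psi_1,\dots,\psi_{2n}$ and invertible antisymmetric $A$ one has $\int[d\psi]\,e^{-\frac12\psi^{T}A\psi}=\mathrm{Pf}(A)$ and
\[
\frac{1}{\mathrm{Pf}(A)}\int[d\psi]\,e^{-\frac12\psi^{T}A\psi}\,\psi_{i_1}\cdots\psi_{i_{2k}}=\sum_{P\in\cP_{2k}}\sigma(P)\prod_{(i,j)\in P}(A^{-1})_{ij}\;,
\]
where $\sigma(P)$ is the sign of the permutation reordering $i_1\cdots i_{2k}$ into the sequence of the ordered pairs of $P$. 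Relabelling the $\psi$'s by the multi-indices and taking $A$ to be the standard-basis matrix of $g^{\otimes D}$, the product becomes $\prod_{(i,j)\in\vP}g^{a^i_\cD a^j_\cD}$ and $\sigma(P)$ is, by the very definition of $\epsilon$ and its properties \eqref{prop1}--\eqref{prop3}, exactly $\epsilon(\vP_{ref,2k},\vP)=\epsilon(\vP_{ref,2k},\vP)^{\sum_c|c|}$; odd moments vanish because the Berezin integral of an odd monomial against an even weight is zero. What remains is the normalization: $A=\bigotimes_c g^c$ is a tensor product of copies of $\delta$'s and $\omega$'s, and $\mathrm{Pf}(A)=\pm1$ (using $\mathrm{Pf}(\omega)=1$ in the canonical basis together with the behaviour of Pfaffians under tensor products), the sign convention in Def.~\ref{def: tensormodel} being fixed so that this equals the constant $1$ multiplying $[dT]$.

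For both cases one finally verifies independence of the chosen directed representative $\vP$ of $P$: reversing one pair $(i,j)\to(j,i)$ multiplies $\epsilon(\vP_{ref,2k},\vP)$ by $(-1)$ and $g^{a^i_\cD a^j_\cD}$ by $(-1)^{\sum_c|c|}$, so the summand $\epsilon(\vP_{ref,2k},\vP)^{\sum_c|c|}\prod g^{a^i_\cD a^j_\cD}$ is unchanged, as asserted. I expect the fermionic normalization/sign matching — confirming that the Pfaffian of the standard-basis quadratic form is $+1$, so that the Berezin moment formula delivers exactly the signs $\epsilon(\vP_{ref,2k},\vP)$ with no extra global factor — to be the only genuinely delicate point; the rest is bookkeeping with the already-established properties of $\epsilon$ (see also the appendix of \cite{integrals}). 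If one prefers to avoid Pfaffians, an equivalent route is induction on $k$ via the Schwinger--Dyson identity $\langle T^{a^1_\cD}T^{a^2_\cD}\cdots\rangle_0=\sum_{j\ge2}(\pm)\,g^{a^1_\cD a^j_\cD}\langle\cdots\widehat{T^{a^j_\cD}}\cdots\rangle_0$, the accumulated signs regrouping into $\epsilon$ by \eqref{prop2}.
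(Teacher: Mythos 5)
Your proposal is correct, but it takes a genuinely different route from the paper. The paper proves the lemma uniformly for both parities by writing the normalized Gaußian expectation as $\bigl[e^{\frac{1}{2}(\partial_T,\partial_T)}\,T^{a^1_\cD}\cdots T^{a^{2k}_\cD}\bigr]_{T=0}$ and iterating the contraction operator: each derivative produces one directed pair and an explicit anticommutation sign, the accumulated signs regroup into $\epsilon(\vP_{ref,2k},\vP)^{\sum_c|c|}$, and the $2^k$ directed versions of each undirected pairing are counted to cancel the $1/2^k$. This is essentially the Schwinger--Dyson induction you mention only as an alternative at the end. You instead split on the parity of $\sum_c|c|$ and import the classical Isserlis and Berezin--Pfaffian moment formulas. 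That is a perfectly valid reduction and arguably makes the identification of $\sigma(P)$ with $\epsilon(\vP_{ref,2k},\vP)$ more transparent, but it buys you two obligations the paper's route avoids by construction: (i) the normalization $\mathrm{Pf}\bigl(\bigotimes_c g^c\bigr)=1$ (respectively $\det\bigl(\bigotimes_c g^c\bigr)=1$ with the $(2\pi)^{-\prod_c N_c/2}$ factor) must actually be checked against Def.~\ref{def: tensormodel} --- you correctly flag this as the delicate point, and note that the paper's derivative representation is normalized by fiat, so no Pfaffian ever appears there; and (ii) in the bosonic case with symplectic factors present, $\bigotimes_c g^c$ is symmetric but not positive definite, so the appeal to a ``real centred Gaußian vector with covariance $A^{-1}$'' must be read formally, consistent with the paper's footnote that all Gaußian integrals are treated as generating functions. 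Your verification of independence from the choice of directed representative $\vP$ matches the paper's argument exactly.
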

\begin{proof}
This classical statement is proved using the derivative representation of normalized Gaußian measures:
\begin{equation}
\begin{split}
&\langle T^{a^1_\cD} \dots T^{a^{2k}_\cD} \rangle_{0}
= \Big[ e^{\frac{1}{2} (\partial_T, \partial_T)}\ T^{a^1_\cD} \dots T^{a^{2k}_\cD} \Big]_{T^{a^i_\cD}=0,\; \forall i} 
= \Big[ \sum_{n\geq 0} \frac{1}{n!2^n} \big((\partial_T, \partial_T)\big)^{n} T^{a^1_\cD} \dots T^{a^{2k}_\cD} \Big]_{T=0}
\\
&= \Big[ \sum_{n\geq 0} \frac{1}{n!2^n} \big((\partial_T, \partial_T)\big)^{n-1}\sum_{b_\cD, c_\cD} \frac{\partial}{\partial T^{b_\cD}} g^{b_\cD c_\cD} \sum_{r=1}^{2k}\; (-1)^{r \sum_{c \in \cD }|c|}\; \delta_{c_\cD}^{a^r_\cD}\; T^{a^1_\cD}\dots \widehat{\kern.4em T^{a^r_\cD}} \dots T^{a^{2k}_\cD} \Big]_{T=0} \;.
\end{split}
\end{equation}
Iterating this procedure, the derivatives will create (directed) pairings $\vP$ of the $2k$ tensors, and pick up minus signs if they have to anticommute with an odd number of tensors. The total sign, generated in this way, is just the sign of $\vP$ relative to the reference pairing $\vP_{ref,2k}=\{(1,2),(2,3),\dots ,(2k-1,2k)\}$.
\begin{align}
 \sum_{\vP \in \vec{\cP}_{2k}} \frac{1}{2^k}\; \epsilon( \vP_{ref,2k},\vP)^{\sum_{c \in \cD }|c|} \Big( \prod_{(i,j) \in \vP} g^{a^i_\cD a^j_\cD} \Big) \;,
\end{align}
where $\vec{\cP}_{2k}$ is the set of directed pairings of $2k$ elements. Because the sign $\epsilon( \vP_{ref,2k},\vP)$ is odd under reordering of pairs, while $g^{a^i_\cD a^j_\cD}=(-1)^{\sum_{c\in\cD}|c|}g^{a^j_\cD a^i_\cD}$, each summand does not depend on the order of the pairs in $\vP$. If $P$ is an (undirected) pairing of $2k$ elements, there are $2^k$ directed pairings $\vP$ associated to it. Taking this multiplicity into account, the statement follows.
\end{proof}
\begin{prop}\label{prop: Gaussianexpofinv}
The Gaußian expectation of an invariant of order $2k$, specified by a $D$-colored directed graph $\vcB$:
\begin{equation}
\langle I_{\vcB}(T) \rangle_0 = \int [dT] e^{-\frac{1}{2} g^{\otimes D}(T,T) }\ I_{\vcB}(T) \;,
 \end{equation}
can be computed as a sum over $(D+1)$-colored undirected graphs $\cG = [\vcG]$ (Feynman graphs) having edges of an additional color 0, such that $\cB\subset\cG$ is the maximal $D$-colored subgraph of colors $c\in\cD$:
\begin{equation}
\langle I_{\vcB}(T) \rangle_0 = \sum_{\substack{\cG,\; \cB \subset \cG \\ |V(\cG)|=2k}} \prod_{c\in\cD} \big((-1)^{|c|}N_c\big)^{F_{c/0}(\cG)} \;.
\end{equation}
The power of $N_c$ is given by the number of faces of $\cG$, of alternating colors $c$ and 0. 
\end{prop}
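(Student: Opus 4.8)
The plan is to expand $I_{\vcB}(T)$ using \eqref{eq: Invariants}, apply Wick's theorem (Lemma~\ref{lem: wick}) to the resulting product of $2k$ tensor components, and reorganize the output as a sum indexed by the colour-$0$ pairing produced by the Wick contractions. Concretely, I would first write
\[
\langle I_{\vcB}(T)\rangle_0
= \sum_{a^1_\cD,\dots,a^{2k}_\cD}\Big(\prod_{c\in\cD}\epsilon(\vP_{ref,2k},\vE^c(\vcB))^{|c|}\prod_{(i,j)\in\vE^c(\vcB)}g^c_{a^i_c a^j_c}\Big)\,\langle T^{a^1_\cD}\cdots T^{a^{2k}_\cD}\rangle_0 ,
\]
and insert \eqref{eq: wick}. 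A Wick pairing $P\in\cP_{2k}$ contributes $\epsilon(\vP_{ref,2k},\vP)^{\sum_{c}|c|}\prod_{(i,j)\in\vP}\prod_{c\in\cD}g^{c,a^i_c a^j_c}$; declaring the edges of $P$ to carry the new colour $0$, the pair $(\vcB,P)$ becomes exactly a $(D+1)$-colored graph $\cG$ with $|V(\cG)|=2k$ whose maximal $D$-colored subgraph is $\cB$, and every such $\cG$ arises in this way from a unique $P$. Since $I_{\vcB}$ is a class function and (by the argument in the proof of Lemma~\ref{lem: wick}) each Wick summand is independent of the orientation chosen for $P$, the sum over $P$ becomes a sum over undirected graphs $\cG\supset\cB$.

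For a fixed $\cG$, the index sum factorizes over the colours $c\in\cD$, and for each $c$ it factorizes further over the connected components of the graph $\vE^c(\vcB)\sqcup\vP$, which are precisely the faces of colours $c/0$ of $\cG$. Around one such face the edges of colour $c$ (carrying $g^c$, coming from $\vcB$) and colour $0$ (carrying $(g^c)^{-1}$, coming from $\vP$) alternate; fixing an orientation of the cycle, the contracted product equals $\tr(\mathds{1}_{N_c})=N_c$ if every edge is traversed forward, while each backward edge replaces one $g^c$ or $(g^c)^{-1}$ by its transpose, i.e.\ by $(-1)^{|c|}$ times itself. Hence the face $f$ evaluates to $(-1)^{|c|\,b_f}N_c$, where $b_f$ is its number of backward edges (its parity being orientation-independent).

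For the sign prefactors I would use \eqref{prop1} and \eqref{prop2} to combine, inside each odd colour, $\epsilon(\vP_{ref,2k},\vE^c(\vcB))\,\epsilon(\vP_{ref,2k},\vP)=\epsilon(\vE^c(\vcB),\vP)$, so that the total $\epsilon$-weight attached to $\cG$ is $\prod_{c:\,|c|=1}\epsilon(\vE^c(\vcB),\vP)$; by Lemma~\ref{lem:sign} this equals $\prod_{c:\,|c|=1}(-1)^{F_{c/0,even}(\cG)}$, since the faces of $\vE^c(\vcB)\sqcup\vP$ are the $c/0$ faces of $\cG$. It then remains to collect, colour by colour and face by face, the contraction sign $(-1)^{|c|\,b_f}$ together with the Lemma~\ref{lem:sign} sign: if $b_f$ is even the face is even, contributing $(+1)\cdot(-1)$ for an odd colour; if $b_f$ is odd the face is odd, contributing $(-1)\cdot(+1)$; in both cases the net sign attached to $f$ is $(-1)^{|c|}$ (and trivially $(-1)^{|c|}=+1$ for even colours). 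Multiplying the factors $N_c$ and these signs over all $c/0$ faces gives $\prod_{c\in\cD}\big((-1)^{|c|}N_c\big)^{F_{c/0}(\cG)}$ for the graph $\cG$, and summing over $\cG$ yields the asserted formula.

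The main obstacle is this final sign accounting: three independent sources of minus signs are in play — the prescription sign $\epsilon$ in the definition of $I_{\vcB}$, the Wick sign $\epsilon$, and the transposes picked up when contracting symplectic forms around a face — and one must verify that, for each $c/0$ face, the product of the contraction sign and the Lemma~\ref{lem:sign} sign equals $(-1)^{|c|}$ \emph{independently} of the parity of $b_f$. Once this identity is established, the factorization over colours and faces is routine, exactly as in the standard colored-tensor-model computation of \cite{gurau,tanasabook}.
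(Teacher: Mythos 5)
Your proposal is correct and follows essentially the same route as the paper: Wick expansion via Lemma~\ref{lem: wick}, identification of the Wick pairing with the colour-$0$ edges of a $(D+1)$-coloured graph, use of \eqref{prop2} to reduce the two reference-pairing signs to $\epsilon(\vE^c(\vcB),\vP)$, and Lemma~\ref{lem:sign} to match that sign against the $(-1)^{|c|}$ transposition factors picked up around each $c/0$ face. Your face-by-face check that even faces get $(-1)^{|c|}$ from the $\epsilon$-sign and odd faces get it from the contraction is exactly the paper's bookkeeping, just stated locally instead of via the global count $F_{c/0,even}+F_{c/0,odd}=F_{c/0}$.
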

Since the invariants are class functions, the result does only depend on undirected graphs.
\begin{proof}
Using Lemma~\ref{lem: wick}, one has:
\begin{equation}
\begin{aligned}
\langle I_{\vcB}(T)\rangle_0 =   \sum_{\{a_\cD\}} \big\langle \prod_{(i,j) \in \vP_{ref,2k} }T^{a^i_\cD} T^{a^j_\cD} \big\rangle_0 \prod_{c\in\cD} \Big(\epsilon(\vP_{ref,2k}, \vE^c(\vcB))^{|c|} \prod_{(i,j) \in \vE^c(\vcB)}
 g_{a^i a^j}^c\Big)&
 \\
= \sum_{\{a_\cD\}} \sum_{P \in {\cP}_{2k}} \epsilon( \vP_{ref,2k},\vP)^{\sum_{c \in \cD }|c|} \Big( \prod_{(i,j) \in \vP} g^{a^i_\cD a^j_\cD} \Big)  \prod_{c\in\cD} \Big(\epsilon(\vP_{ref,2k}, \vE^c(\vcB))^{|c|} \prod_{(i,j) \in \vE^c(\vcB)}
 g^c_{a^i_c a^j_c}\Big)&\;.
\end{aligned}
\end{equation}
Now, using Property~\eqref{prop2} of the sign of two pairings to eliminate the dependence on the reference $\vP_{ref,2k}$ and reorganizing the products according to color leads to:
\begin{align}
\sum_{\{a_\cD\}} \sum_{P \in {\cP}_{2k}} 
\prod_{c\in\cD}\Bigg( \epsilon(\vP,\vE^c(\vcB))^{|c|} \Big( \prod_{(i,j) \in \vP} g^{c,\, a^i_c a^j_c} \Big)\Big( \prod_{(i,j) \in \vE^c(\vcB)} g^c_{a^i_c a^j_c}\Big)\Bigg) \;.
\end{align}
By adding edges of a new color 0 to $\vcB$, according to the pairing $\vP$, a $(D+1)$ directed colored graph $\vcG$ is obtained. Along the faces of colors $c/0$, $g^c$ and $(g^c)^{-1}$ alternate and since all indices are summed, each such face contributes a factor $N_c$. Note however, that since $g^c_{a_cb_c} g^{c,\; d_cb_c} = (-1)^{|c|}\delta_{a_c}^{d_c}$, a face picks up an additional sign if an odd number of edges are pointing in the same direction around the face (such a face was called odd). With this and the expression of $\epsilon(\vP,\vE^c(\vcB))$ in terms of the number of even faces of colors $c/0$ (Lemma~\ref{lem:sign}) we obtain:
\begin{equation}
\sum_{\{a_\cD\}} \sum_{\substack{\cG,\; \cB \subset \cG \\ |V(\cG)|=2k}} 
\prod_{c\in\cD}\Bigg( \big((-1)^{F_{c/0,even}(\vcG)}\big)^{|c|} \big((-1)^{F_{c/0,odd}(\vcG)}\big)^{|c|}  \big(N_c\big)^{F_{c/0}(\vcG)}\Bigg) \;,
\end{equation}
and since the sum of the number of even and odd face of colors $c/0$ is equal to the total number of such faces in $\vcG$ (equivalent to the ones in $\cG$), this concludes the proof. Note however, that the result does not depend on the particular choices of directed graph. This is true at each intermediate step because the symmetry properties of $\epsilon(\cdot\,,\cdot)^{|c|}$ always agree with the graded symmetry of $g^c$.
\end{proof}

At this point we have all the ingredients for the perturbative evaluation of the partition function.
\begin{prop}\label{prop: Z}
The partition function $Z(\{\lambda\}$ of the graded colored tensor model (Def.~\ref{def: tensormodel}) can be evaluated by a formal power series in the coupling constants that is indexed by $(D+1)$-colored undirected graphs $\cG$. Let us denote the $D$-colored maximally connected subgraphs of colors $c\in\cD$, called bubbles, by $\cB \subset \cG$. The sum runs only over graphs $\cG$ without bubbles having exactly two vertices:
\begin{equation}
Z(\{\lambda_\cB\}) = \sum_{\substack{\cG\\ |V(\cB)|\neq 2\; \forall \cB \subset \cG }} \frac{1}{n_b(\cG)!} \Big(\prod_{\cB \subset \cG} \frac{\lambda_\cB}{|V(\cB)|}\Big)\Big( \prod_{c\in\cD} \big((-1)^{|c|}N_c\big)^{F_{c/0}(\cG)} \Big) \;,
\end{equation}
and $n_b(\cG)$ denotes the total number of bubbles in $\cG$.
\end{prop}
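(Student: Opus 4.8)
The plan is to obtain $Z$ as a formal power series by expanding the interaction part of $e^{-S[T]}$ and evaluating the resulting Gaussian integrals with the tools already developed. First I would use that, by the normalization chosen in Def.~\ref{def: tensormodel}, the Gaussian expectation $\langle\,\cdot\,\rangle_0$ is already normalized ($\langle 1\rangle_0=1$, i.e.\ $Z=1$ at $\lambda_\cB=0$), so that $Z(\{\lambda_\cB\})=\langle e^{-S_{\mathrm{int}}[T]}\rangle_0$ with $S_{\mathrm{int}}[T]=\sum_{\cB}\big(\lambda_\cB/|V(\cB)|\big)I_\cB(T)$, the sum ranging over connected bubbles with more than two vertices. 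Taylor-expanding the exponential, the order-$n$ contribution is $\frac{(-1)^n}{n!}\sum_{\cB_1,\dots,\cB_n}\big(\prod_i \lambda_{\cB_i}/|V(\cB_i)|\big)\,\langle I_{\cB_1}(T)\cdots I_{\cB_n}(T)\rangle_0$, each $\cB_i$ ranging over the same set of bubbles. Since this is a formal series, no convergence issue arises.

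Next I would invoke the union-of-invariants identity~\eqref{eq: unionofinv} to rewrite the product $I_{\cB_1}(T)\cdots I_{\cB_n}(T)$ as the single invariant $I_{\cB_1\sqcup\cdots\sqcup\cB_n}(T)$ of the (disconnected) $D$-colored graph $\cB_1\sqcup\cdots\sqcup\cB_n$, the new reference pairing being the disjoint union of the individual ones in accordance with property~\eqref{prop3}. Crucially, Prop.~\ref{prop: Gaussianexpofinv} is stated for an arbitrary — not necessarily connected — $D$-colored graph, so it applies directly and rewrites $\langle I_{\cB_1\sqcup\cdots\sqcup\cB_n}(T)\rangle_0$ as a sum over $(D+1)$-colored graphs $\cG$ whose maximal $D$-colored subgraph equals $\cB_1\sqcup\cdots\sqcup\cB_n$, each weighted by $\prod_{c\in\cD}\big((-1)^{|c|}N_c\big)^{F_{c/0}(\cG)}$. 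Interchanging the formal sums and reorganizing the series by the Feynman graph $\cG$ rather than by the tuple $(\cB_1,\dots,\cB_n)$, I would note that the connected $D$-colored components of $\cG$ are precisely the interaction bubbles that produced it — color-$0$ edges do not merge $D$-colored components — and that, since $S[T]$ contains only bubbles with $|V(\cB)|>2$, every occurring $\cG$ has all its bubbles on more than two vertices, which is exactly the stated restriction. Because we compute $Z$ itself and not its logarithm, the graphs $\cG$ need not be connected.

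What remains, and is the delicate part, is the bookkeeping of the combinatorial prefactors and signs. For a fixed $\cG$ one necessarily has $n=n_b(\cG)$, and the $\frac1{n!}$ from the exponential must be combined with the number of orderings of the $n_b(\cG)$ bubbles of $\cG$ — together with the sign $(-1)^{n_b(\cG)}$ coming from $e^{-S_{\mathrm{int}}}$, displayed explicitly or absorbed into the convention for the $\lambda_\cB$ — so as to land on the prefactor $\frac1{n_b(\cG)!}$ and the product $\prod_{\cB\subset\cG}\lambda_\cB/|V(\cB)|$ displayed in the statement (one factor per bubble $\cB\subset\cG$). I expect the matching of ordered $n$-tuples of bubbles with the vertex-labelled Feynman graphs — making sure no spurious automorphism or multiplicity factors slip in — to be the main obstacle; the rest, including the independence of every expression of the chosen directed representatives, is inherited verbatim from Lemma~\ref{lem: wick} and Prop.~\ref{prop: Gaussianexpofinv}, whose intermediate steps are already representative-independent.
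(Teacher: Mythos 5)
Your proposal follows essentially the same route as the paper's proof: expand the interaction exponential, rewrite each product of invariants as a single disconnected invariant via \eqref{eq: unionofinv}, apply Prop.~\ref{prop: Gaussianexpofinv} to the resulting Gaussian expectation, and reorganize the sum by the $(D+1)$-colored Feynman graphs $\cG$. The only differences are cosmetic — you index the expansion by ordered $n$-tuples with a $1/n!$ where the paper uses multiplicities $p_\cB$ with $\prod_\cB 1/p_\cB!$ — and your flagging of the $(-1)^{n_b(\cG)}$ factor is a fair observation, since the paper silently drops that sign (i.e.\ absorbs it into the convention for the $\lambda_\cB$) when passing from $e^{-S}$ to the stated series.
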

\begin{proof}
We expand the interaction part to find:
\begin{align}
Z = \int  [dT]  e^{-\frac{1}{2} g^{\otimes}(T,T) } \sum_{\{p_\cB \geq 0\}} \prod_{\cB} \frac{1}{p_\cB!} \Big( \frac{\lambda_\cB}{|V(\cB)|} I_{\cB}(T)  \Big)^{p_\cB} \;,
\end{align}
where we associate to each undirected $D$-colored graph $\cB$ a multiplicity $p_\cB\geq 0$, and sum over these.
Commuting the sum and the Gaußian integral, we obtain the perturbative series:
\begin{equation}
\sum_{\{p_\cB \geq 0\}} \Bigg(\prod_{\cB} \frac{1}{p_\cB!} \Big(\frac{\lambda_\cB}{|V(\cB)|} \Big)^{p_\cB}\Bigg) \big\langle \prod_{\cB} \big(I_\cB(T)\big)^{p_\cB} \big\rangle_0 \;.
\end{equation}
As any product of invariants can be seen as a single disconnected invariant \eqref{eq: unionofinv}, Prop.~\ref{prop: Gaussianexpofinv} can be directly applied, and one obtains a sum over $(D+1)$-colored undirected graphs $\cG$ with the only condition, that they don't have bubbles $\cB \subset \cG$ with exactly two vertices. Whenever the graph $\cG$ contains a bubble $\cB$ this contributes a factor $\lambda_\cB$:
\begin{equation}
\sum_{\substack{\cG \\ |V(\cB)|\neq 2\; \forall \cB \subset \cG }} \frac{1}{n_b(\cG)!} \Big(\prod_{\cB \subset\cG } \frac{\lambda_\cB}{|V(\cB)|}\Big)\Big( \prod_{c\in\cD} \big((-1)^{|c|}N_c\big)^{F_{c/0}(\cG)} \Big) \;,
\end{equation}
here $n_b(\cG)$ is the total number of bubbles in $\cG$.
\end{proof}

\begin{cor}\label{cor: I}
The expectation value of trace invariants are computed as derivatives of the logarithm of the partition function:
\begin{equation}
\begin{aligned}
\text{for }|V(\cB)|=2:&\quad \langle g^{\otimes D}(T,T)\rangle = \Big(\prod_{c\in\cD} (-1)^{|c|}N_c\Big) + \Big(\sum_{\substack{\cB\ \text{connected,}\\ |V(\cB)|>2}} \lambda_\cB |V(\cB)|\; \frac{\del}{\del \lambda_\cB}\Big) \ln Z(\{\lambda\}) \;, \\
\text{for }|V(\cB)|>2:&\quad \langle I_{\vcB}(T)\rangle =-|V(\cB)|\; \frac{\del}{\del \lambda_\cB} \ln Z(\{\lambda\}) \;.
\end{aligned}
\end{equation}
The expectation value can be computed explicitly as a formal sum, analogous to $Z(\{\lambda\})$. The derivative acts on one $\lambda_\cB$ in the product and marks the corresponding bubble $[\cB]\subset[\cG]$:
\begin{align}
\text{for }|V(\cB)|>2:&\quad \langle I_{\vcB}(T)\rangle = \sum_{\substack{\cG\;\text{connected}, \\ \cB\subset \cG\;\text{marked},\\ |V(\cB')|\neq 2\; \forall \cB' \subset \cG }} \frac{1}{n_b(\cG)!} \Big(\prod_{\substack{ \cB' \subset \cG \\ \cB'\neq\cB}} \frac{\lambda_{\cB'}}{|V(\cB')|}\Big)\Big( \prod_{c\in\cD} \big((-1)^{|c|}N_c\big)^{F_{c/0}(\cG)} \Big) \;.
\end{align}
\end{cor}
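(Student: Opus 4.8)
My plan is to derive the two logarithmic-derivative identities directly, handle the quadratic invariant separately by a rescaling (equivalently Schwinger--Dyson) argument, and then read off the explicit graph sum from Prop.~\ref{prop: Z}.

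First, for a bubble $\cB$ with $|V(\cB)|>2$ the coupling $\lambda_\cB$ enters the action only through the term $\frac{\lambda_\cB}{|V(\cB)|}I_{\vcB}(T)$, so differentiating $Z=\int[dT]\,e^{-S[T]}$ under the integral sign gives $\del_{\lambda_\cB}Z=-\frac{1}{|V(\cB)|}\int[dT]\,e^{-S[T]}\,I_{\vcB}(T)=-\frac{1}{|V(\cB)|}\,Z\,\langle I_{\vcB}(T)\rangle$, hence $\langle I_{\vcB}(T)\rangle=-|V(\cB)|\,\del_{\lambda_\cB}\ln Z$. For the quadratic invariant there is no free coupling, so I would introduce one by hand: let $Z_t$ be the partition function with $\frac12 g^{\otimes D}(T,T)$ replaced by $\frac t2 g^{\otimes D}(T,T)$, so that $\del_t\ln Z_t\big|_{t=1}=-\frac12\langle g^{\otimes D}(T,T)\rangle$. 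On the other hand, the substitution $T\mapsto t^{-1/2}T$ removes the $t$ from the quadratic term, turns each interaction into $t^{-|V(\cB)|/2}I_{\vcB}(T)$ by homogeneity of $I_{\vcB}$ of degree $|V(\cB)|$, and multiplies $[dT]$ by $t^{-\frac12(-1)^{\sum_c|c|}\prod_c N_c}$: a real bosonic component contributes $t^{-1/2}$ to the measure, while a real fermionic (Berezin) component contributes $t^{+1/2}$, which is the origin of the sign $(-1)^{\sum_c|c|}=\prod_c(-1)^{|c|}$. Comparing the two expressions for $\del_t\ln Z_t|_{t=1}$, the measure term contributes $-\frac12\prod_c\big((-1)^{|c|}N_c\big)$ and the rescaled couplings contribute $-\frac12\sum_{\cB}|V(\cB)|\lambda_\cB\,\del_{\lambda_\cB}\ln Z$; rearranging yields the stated formula for $\langle g^{\otimes D}(T,T)\rangle$. (Equivalently one contracts the Schwinger--Dyson identity $0=\sum_{a_\cD}\int[dT]\,\del_{T^{a_\cD}}\big(T^{a_\cD}e^{-S[T]}\big)$ with the appropriate $g$, the boundary term of the Berezin integral being what flips the sign in the odd case.)

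Next, to obtain the explicit graph sum I would unfold $-|V(\cB)|\,\del_{\lambda_\cB}\ln Z$ exactly as in the proof of Prop.~\ref{prop: Z}. The numerator $Z\,\langle I_{\vcB}(T)\rangle=\int[dT]\,e^{-S[T]}I_{\vcB}(T)$ admits the very same expansion as $Z$, except that one extra factor $I_{\vcB}(T)$ sits inside every Gaußian expectation; by Prop.~\ref{prop: Gaussianexpofinv} this produces, in each resulting $(D+1)$-colored graph $\cG$, one distinguished (``marked'') copy of the bubble $\cB$, while every face of colors $c/0$ still contributes $(-1)^{|c|}N_c$. Dividing by $Z$ and invoking the usual linked-cluster (exponential-formula) argument discards every connected component of $\cG$ not attached to the marked bubble, leaving the sum over connected $\cG$ with marked $\cB\subset\cG$, the symmetry factor $\frac{1}{n_b(\cG)!}$, the couplings $\prod_{\cB'\subset\cG,\,\cB'\neq\cB}\frac{\lambda_{\cB'}}{|V(\cB')|}$ of the remaining bubbles, and the factor $\prod_{c\in\cD}\big((-1)^{|c|}N_c\big)^{F_{c/0}(\cG)}$, which is the claimed formula.

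The two differentiations and the linked-cluster step are routine. The one point that genuinely needs care is the sign in the quadratic case: one must check that the Jacobian of the rescaling (equivalently, the boundary term of the integration by parts) is precisely $(-1)^{\sum_c|c|}\prod_c N_c$, so that the constant term of $\langle g^{\otimes D}(T,T)\rangle$ comes out as $\prod_c(-1)^{|c|}N_c$ and not $\prod_c N_c$. This is the only place where the overall bosonic/fermionic parity of the tensor, rather than the already-bookkept per-color signs $(-1)^{|c|}$, enters; and since Prop.~\ref{prop: Gaussianexpofinv} has already absorbed all graded signs into the factors $(-1)^{|c|}N_c$, the exponential-formula step proceeds identically to the ungraded case.
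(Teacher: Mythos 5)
Your proof is correct and follows essentially the same route as the paper: differentiation under the integral for $|V(\cB)|>2$, the standard linked-cluster argument for the explicit connected sum, and for the quadratic invariant the dilatation identity, which you implement as a $t$-rescaling of $\frac12 g^{\otimes D}(T,T)$ but which, as you yourself note, is exactly the paper's Schwinger--Dyson equation $0=\sum_{a_\cD}\int[dT]\,\del_{T^{a_\cD}}\big(T^{a_\cD}e^{-S[T]}\big)$. Your careful tracking of the Berezinian sign, giving the Jacobian exponent $(-1)^{\sum_c|c|}\prod_c N_c$, reproduces the paper's constant term $\prod_{c\in\cD}(-1)^{|c|}N_c$, so the one delicate point is handled correctly.
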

\begin{proof}
The statements for $|V(\cB)|> 2$ follow from Def.~\ref{def: tensormodel}, \eqref{eq: ZandI}, Prop.~\ref{prop: Z} and the usual fact, that the logarithm restricts to a sum over connected graphs. For $|V(\cB)|=2$ consider the following Schwinger-Dyson-Equation:
\begin{equation}
\begin{aligned}
0&=\frac{(-1)^{\sum_c |c|}}{Z} \int [dT] \sum_{a_\cD} \frac{\del}{\del T^{a_\cD}}\Big(T^{a_\cD} e^{-S[T]}\Big)
\\
&= \Big(\prod_{c\in\cD} (-1)^{|c|}N_c\Big) - \langle g^{\otimes D}(T,T)\rangle - \;\smashoperator{ \sum_{\substack{\cB \ \text{conn.}\\ |V(\cB)|>2}} }\; \lambda_\cB \langle I_{\vcB}(T)\rangle \;,
\end{aligned}
\end{equation}
and use the result for $\langle I_{\vcB}(T)\rangle$. The first term in the second line can be interpreted as a ring-graph without any vertex but $D$ colored edges.
\end{proof}


\section{Conclusion and Outlook}\label{sec: conclusion}

In this paper, we showed the $N$ to $-N$ duality between colored tensor models with orthogonal and symplectic symmetry. We showed that the amplitude of any graph $\mathcal{G}$ comes with a factor $\prod_{c \in \mathcal{C}} \left( (-1)^{|c|} N_c \right)^{F_{c/0}(\mathcal{G})}$, where $|c|$ is a grading assigned to the color $c$ controlling its symmetry properties ($|c|=0$ for $O(N_c)$ and $|c|=1$ for $Sp(N_c)$), and $F_{c/0}(\mathcal{G})$ is the number of faces of color $c$ in $\mathcal{G}$. As a consequence, the amplitude of any graph is invariant under simultaneously changing $O(N_c) \leftrightarrow Sp(N_c)$ and $N_c \leftrightarrow - N_c$ for any color $c$. Our analysis relies heavily on the properties of the sign of oriented pairings. This sign ensures that the invariants are specified by undirected graphs and do not depend on the arbitrarily chosen orientation of edges. Then, by carefully keeping track of the signs coming from the bosonic or fermionic nature, as well as reorientations of edges in directed graphs, we obtain the desired expressions for the graph amplitudes.

A natural follow up of this project is to consider tensors with symmetry properties under permutation of their indices. The work of \cite{King,Cvitanovic,Mkrtchyan-Veselov,Cvitanovicbook} suggests, that similar models could be constructed for arbitrary (also mixed) tensor representations of the orthogonal and symmetric group. E.g.~we suspect a relation between a tensor model with tensors of order $D$ transforming in the totally symmetric representation of $O(N)$ (similar to \cite{sylvan,Carrozza:2021qos,Carrozza}) and a model build on the totally antisymmetric representation of $Sp(N)$. For odd $D$ the symplectic model should again be fermionic. Moreover, one should also explore  the implications of the $N\to -N$ duality for tensor field theories. The sign change may generate new renormalization group fixed points. Finally, more general models with $OSp(m,n)$ symmetry could be considered. In this case one would interchange $n$ and $m$.

\paragraph*{Acknowledgements.}
We thank Razvan Gurau, Thomas Krajewski and Adrian Tanasa for comments and discussions. The authors gratefully acknowledge support of the Institut Henri Poincaré (UAR 839 CNRS-Sorbonne Université), and LabEx CARMIN (ANR-10-LABX-59-01), where some of this work has been carried out. 
H.~K.~has been supported by the Deutsche Forschungsgemeinschaft (DFG, German Research Foundation) under Germany's Excellence Strategy EXC--2181/1 -- 390900948 (the Heidelberg STRUCTURES Cluster of Excellence).
T.~M.~has been partially supported by the ANR-20-CE48-0018 “3DMaps” grant.

{\pagestyle{plain}
	\bibliography{references} 
	\addcontentsline{toc}{section}{References}
}


\end{document}